\documentclass[a4paper,conference]{IEEEtran}

\IEEEsettopmargin{t}{30mm}
\IEEEquantizetextheight{c}
\IEEEsettextwidth{14mm}{14mm}
\IEEEsetsidemargin{c}{0mm}

\usepackage[utf8]{inputenc}
\usepackage[T1]{fontenc}
\usepackage{url}
\usepackage{ifthen}
\usepackage{cite}
\usepackage[cmex10]{amsmath} 
\usepackage{amssymb}
\usepackage{algorithm}
\usepackage{algpseudocode}
\usepackage{graphicx}
\usepackage{color}
\usepackage{xcolor}
\usepackage{cite}
\usepackage{subcaption}
\usepackage{comment}
\usepackage{mathtools}
\usepackage{tabulary}
\usepackage[nohyperlinks]{acronym}
\usepackage{upgreek}
\usepackage{placeins}
\usepackage[super]{nth}
\usepackage{dsfont}
\usepackage{hyperref}
\usepackage{array}
\usepackage{soul}
\usepackage{dsfont}
\usepackage{cancel}
\usepackage{multirow}
\usepackage{bm}
\usepackage{bbm}
\usepackage{amsthm}
\usepackage{url}
\usepackage[acronym]{glossaries}
\usepackage{balance}
\glsdisablehyper

\newtheorem{proposition}{\bf Proposition}

\newacronym{RIS}{RIS}{Reconfigurable Intelligent Surface}
\newacronym{NL}{NL}{NonLinear}
\newacronym{SIM}{SIM}{Stacked Intelligent Metasurfaces}
\newacronym{DNN}{DNN}{Deep artificial Neural Network}
\newacronym{NN}{NN}{neural network}
\newacronym{EI}{EI}{Edge Inference}
\newacronym{EM}{EM}{ElectroMagnetic}
\newacronym{TOC}{GOC}{Goal-Oriented Communications}
\newacronym{TX}{TX}{Transmitter}
\newacronym{RX}{RX}{Receiver}
\newacronym{UE}{UE}{User Equipment}
\newacronym{BS}{BS}{Base Station}
\newacronym{LoS}{LoS}{Line-of-Sight}
\newacronym{NLoS}{NLoS}{Non Line-of-Sight}
\newacronym{CNN}{CNN}{Convolutional Neural Network}
\newacronym{FFNN}{FFNN}{Feed-Forward Neural Network}
\newacronym{ReLU}{ReLU}{Rectified Linear Unit}
\newacronym{JSCC}{JSCC}{Joint Source Channel Coding}
\newacronym{E2E}{E2E}{End-to-End}
\newacronym{D2D}{D2D}{Device-to-Device}
\newacronym{DeepSC}{DeepSC}{Deep Semantic Communications}
\newacronym{SISO}{SISO}{Single Input Single Output}
\newacronym{MISO}{MISO}{Multiple Input Single Output}
\newacronym{MIMO}{MIMO}{Multiple-Input Multiple-Output}
\newacronym{OAC}{OAC}{Over-the-Air Computation}
\newacronym{MSE}{MSE}{Mean Squared Error}
\newacronym{SOTA}{SotA}{State of the Art}
\newacronym{SGD}{SGD}{Stochastic Gradient Descent}
\newacronym{IoT}{IoT}{Internet of Things}
\newacronym{AE}{AE}{Auto-Encoder}
\newacronym{PDF}{PDF}{Probability Density Function}
\newacronym{AWGN}{AWGN}{Additive White Gaussian Noise}
\newacronym{CSI}{CSI}{Channel State Information}
\newacronym{CFR}{CFR}{Channel Frequency Response}
\newacronym{ISAC}{ISAC}{Integrated Sensing and Communications}
\newacronym{iid}{i.i.d.}{independent and identically distributed}
\newacronym{MINN}{MINN}{Metasurfaces-Integrated Neural Network}
\newacronym{MS}{MS}{MetaSurface}
\newacronym{CE}{CE}{Cross Entropy}
\newacronym{HRIS}{HRIS}{Hybrid RIS}
\newacronym{RF}{RF}{Radio-Frequency}
\newacronym{VAE}{VAE}{Variational Auto-Encoder}
\newacronym{PSK}{PSK}{Phase Shift Keying}
\newacronym{SVD}{SVD}{Singular Value Decomposition}
\newacronym{MLP}{MLP}{Multi Layer Perceptron}
\newacronym{SNR}{SNR}{Signal-to-Noise Ratio}
\newacronym{MI}{MI}{Mutual Information}
\newacronym{MEC}{MEC}{Multi-access Edge Computing}
\newacronym{WMMSE}{WMMSE}{Weighted Minimum Mean Square Error}
\newacronym{ANN}{ANN}{Artificial Neural Network}
\newacronym{SLFN}{SLFN}{Single hidden Layer Feedforward Network}
\newacronym{ELM}{ELM}{Extreme Learning Machine}
\newacronym{LTI}{LTI}{Linear Time Invariant}
\newacronym{OTA}{OTA}{Over-The-Air}
\newacronym{AM}{AM}{Amplitude Modulation}
\newacronym{LS}{LS}{Least-Squares}
\newacronym{DC}{DC}{Direct Current}
\newacronym{RFC}{RFC}{Radio-Frequency Chain}
\newacronym{ML}{ML}{Machine Learning}
\newacronym{XL}{XL}{eXtremely Large}
\newacronym{PGD}{PGD}{Projected Gradient Descent}
\newacronym{WBCD}{WBCD}{Wisconsin Breast Cancer Dataset}
\newacronym{DMA}{DMA}{Dynamic Metasurface Antenna}
\newacronym{CMS}{CMS}{Cascaded MetaSurfaces}
\newacronym{GO}{GO}{Genetic Optimization}

\title{Over-The-Air Extreme Learning Machines with Nonlinear Stacked Intelligent Metasurfaces}

\newcommand\blfootnote[1]{%
  \begingroup
  \renewcommand\thefootnote{}\footnotetext{#1}%
  \addtocounter{footnote}{-1}%
  \endgroup
}

\author{%
  \IEEEauthorblockN{Kyriakos Stylianopoulos$^1$,
  Mattia Fabiani$^{2,3}$,
  Giulia Torcolacci$^{2,3}$,
  Davide Dardari$^{2,3}$, George C. Alexandropoulos$^{1}$
  }
  \IEEEauthorblockA{$^1$Department of Informatics and Telecommunications, National and Kapodistrian University of Athens, Greece} 
  \IEEEauthorblockA{$^2$DEI, University of Bologna, Italy; $^3$National Laboratory of Wireless Communications (WiLab), CNIT, Italy\\
    e-mails: \{kstylianop, alexandg\}@di.uoa.gr, \{mattia.fabiani5, g.torcolacci, davide.dardari\}@unibo.it
}
}

\begin{document}

\maketitle
\blfootnote{This work has been supported by the SNS JU projects 6G-DISAC and TIMES under the EU's Horizon Europe research and innovation program under grant agreement numbers 101139130 and 101096307, respectively. 
The work of G. Torcolacci was funded by an NRRP Ph.D. grant.}

\begin{abstract}
The recently envisioned goal-oriented communications paradigm requires machine learning inference to be performed directly on wirelessly transferred data. This paper presents an eXtremely Large (XL) Multiple-Input Multiple-Output (MIMO) system that operates as an Extreme Learning Machine (ELM) to execute Over-The-Air (OTA) binary classification. To reduce hardware complexity, the receiver is equipped with cascaded metasurfaces terminating in a single radio-frequency chain. A front metasurface layer applies a fixed nonlinear response to the incoming signal, acting as the ELM's activation function. Subsequent tunable linear metasurface layers physically approximate the trained network weights directly in the wave domain. Numerical evaluations across diverse datasets showcase that our XL MIMO architecture achieves classification accuracy comparable to idealized digital models, thereby proving the viability of low-complexity, wave-domain OTA learning.

\end{abstract}

\begin{IEEEkeywords}
Over-the-air inference, extreme learning machines, nonlinear signal processing, XL MIMO, SIM.
\end{IEEEkeywords}

\section{Introduction}
Future wireless networks will leverage \gls{EI} to jointly train transceiver pairs as end-to-end \gls{ML} models for efficient sensory data inference~\cite{GO_review2}.
By exchanging task-specific representations through the channel, \gls{EI} overcomes the inefficiencies of conventional decoupled designs in terms of data rate and computational burden, since feature extraction is performed alongside encoding at the \gls{TX}, while the \gls{RX} directly infers the target values instead of reconstructing the input data~\cite{GJZ24_SIM_TOC, Stylianopoulos_GO}.
 
To further improve computational efficiency, \gls{OTA} computing exploits the wireless propagation domain by performing computations directly through the superposition of traveling \gls{RF} signals~\cite{OTA_review}.
The \gls{OTA} paradigm has recently attracted interest in wireless \gls{ML} applications.
Specifically, hardware platforms based on \glspl{MS} have been proposed to emulate \gls{DNN} layers~\cite{XYN18_Diffractive_DNN, Momeni2022_wave_DL_computing,Stylianopoulos_GO,GJZ24_SIM_TOC} for \gls{OTA} inference, which are trained through backpropagation, or to approximate digitally trained \gls{DNN} weight matrices \gls{OTA}~\cite{Gunduz_Layer_Approximation, Pandolfo_SIM_semantic_alignment}.
Nevertheless, many existing systems still rely on digital processing and lack theoretical foundations.
Crucially, most \gls{MS}-based \gls{DNN} implementations are limited to performing linear operations~\cite{Styl:25}, which significantly constrains the approximation capability of the resulting ML models.

Addressing some of these limitations, the work of \cite{Stylianopoulos_MIMO_ELM} proposed an \gls{XL} \gls{MIMO} architecture that operates as an \gls{ELM}~\cite{HuaZhuSie:06}, enabling the partial execution of \gls{DNN} computations \gls{OTA}. In this framework, the wireless channel is exploited as a source of random hidden-layer weights, while the \gls{RX} analog combiner implements the output layer.
This approach ensures rapid training and efficient reconfiguration under channel variations, while retaining the universal function approximation capability. However, it faces practical limitations and scalability concerns due to real-valued signal constraints and hardware complexity induced by the use of \gls{NL} power amplifiers and numerous \gls{RF} chains. 

In this paper, capitalizing on the complex-domain \gls{ELM} framework~\cite{HuaLiCheSie:08} and building upon recent \gls{NL} metamaterial advancements~\cite{Styl:25,FabTorDar:25,Omr:25}, we present an \gls{XL}-\gls{MIMO}-\gls{ELM} system with an \gls{RX} structure comprising \gls{SIM}~\cite{AXN23_SIM}, followed by a single antenna and its respective \gls{RF} chain. The first \gls{MS} layer interfacing with the \gls{MIMO} channel is composed of unit cells exhibiting identical fixed \gls{NL} responses and implements the \gls{ELM} activation function, while the subsequent linear \gls{MS} layers realize trainable \gls{OTA} combining, effectively approximating the digital \gls{ELM} output weights. The proposed \gls{OTA}-\gls{ELM} system enables fast training and minimizes digital processing at the reception side, while significantly reducing the hardware complexity with respect to the \gls{XL}-\gls{MIMO}-\gls{ELM} of~\cite{Stylianopoulos_MIMO_ELM}. To configure the linear \gls{MS} layers, we propose two distinct \gls{OTA} training strategies. The first relies on \gls{PGD} to approximate the ideal \gls{LS} solution when the received signal at each \gls{NL} unit cell can be measured. Alternatively, we introduce a black-box \gls{GO} approach that only requires the final scalar output at the \gls{RF} chain, significantly reducing the measurement overhead. The proposed architecture is numerically validated on standard datasets, demonstrating that it approaches the accuracy of idealized digital models while being robust to system variations.


\section{The Proposed XL MIMO System Model} \label{sec:system-model}

Consider a narrowband \gls{XL} \gls{MIMO} system with an $N_{\rm t}$-antenna \gls{TX} and an \gls{RX} comprising a diffractive receiving \gls{MS}, followed by an analog combining stage (detailed subsequently) and a single reception \gls{RF} chain.
Instead of performing conventional wireless communications, the system is trained end-to-end to act as an \gls{OTA} function approximator~\cite{Stylianopoulos_MIMO_ELM}.
In particular, given a prior dataset $\mathcal{D} \triangleq \{(\mathbf{x}^{(i)}, z^{(i)})\}_{i=1}^D$ of $D$ input-target pairs, the \gls{XL} \gls{MIMO} system is designed to approximate the $\mathbf{x} \to z$ mapping so that the \gls{TX} observes the input data $\mathbf{x}$ (not necessarily belonging to $\mathcal{D}$) and the \gls{RX} estimates its (unobserved) target value $z$. The system is thus intended to perform \gls{EI}, with all computational processing performed exclusively \gls{OTA}, i.e., in the analog/\gls{RF} domain.

We assume that $\mathbf{x}^{(i)} \in (0,1)^{N_{\rm t}}$ and, without loss of generality, $z^{(i)} \in \{0,1\}$, that is, the dimension of the data observations are equal to the number of \gls{TX} antennas and the dataset is used for real-valued binary classification. Note that the \gls{TX} may incorporate a trainable feature extraction module to reduce the dimensionality of $\mathbf{x}$~\cite{Stylianopoulos_GO, GJZ24_SIM_TOC}; however, this aspect is left for future investigation.
For reasons that will be clarified later, the data symbols are subject to \gls{AM}; accordingly, each element of the transmitted signal $\bar{\mathbf{x}} \in \mathbb{C}^{N_{\rm t} \times 1}$ is defined $\forall i=1,\dots,N_{\rm t}$ as follows:
\begin{equation}\label{eq:transmit-signal}
    [\bar{\mathbf{x}}]_i \triangleq [\mathbf{x}]_i \exp(\jmath \pi  [\bm \psi]_i),
\end{equation}
where the elements of $\bm \psi$ may be chosen arbitrarily.
The baseband representation of the impinging signal at the \gls{MS} layer of the \gls{RX}, which is composed of $N_{\rm r}$ metamaterial elements, can be expressed as (baseband representation):
\begin{equation}\label{eq:received-signal}
    \mathbf{y} \triangleq \mathbf{H} \bar{\mathbf{x}} \in \mathbb{C}^{N_{\rm r} \times 1},
\end{equation}
where $\mathbf{H} \in \mathbb{C}^{N_{\rm r} \times N_{\rm t}}$ represents the \gls{XL} \gls{MIMO} channel response.
For the subsequent theoretical analysis, we consider the case where $\mathbf{H}$ follows a Ricean fading distribution~\cite{6184250} and remains quasi-static for the duration of the training process:
\begin{equation}\label{eq:ricean}
   \mathbf{H} = \sqrt{P_L} \left( \sqrt{\frac{K}{1+K}} \mathbf{H}_{\rm LoS} + \sqrt{\frac{1}{1+K}} \mathbf{H}_{\rm NLoS} \right),
\end{equation}
where $\mathbf{H}_{\rm LoS}$ is a rank-$1$ matrix of steering modeling for the \gls{LoS} component, while $\mathbf{H}_{\rm NLoS} \sim \mathcal{CN}(\mathbf{0}, 1/\sqrt{N_{\rm t} N_{\rm r}}\mathbf{I})$, and $K$ is the Ricean factor that controls the dominance of either component.

We consider a diffractive \gls{MS} layer at the \gls{RX} composed of unit elements applying a memoryless \gls{NL} transformation. Denoting with $F(\cdot)$ the bandpass response of the generic element of the \gls{NL} \gls{MS}, the baseband-equivalent output $g(\cdot)$ preserves the phase of the input, while transforming its envelope through the first-order harmonic extraction. The resulting element-wise mapping of the \gls{MS} is expressed as $g(\mathbf{y}) \triangleq C(|\mathbf{y}|) \exp(\jmath {\rm arg}\{\mathbf{y}\})$, where $C(\cdot)$ denotes the \gls{AM}/\gls{AM} characteristic derived as follows~\cite{FabTorDar:25}:
\begin{equation} \label{eq:C}
C(v)=\frac{2}{\pi} \int_{0}^{\pi} F(v\cos(\phi)) \cos(\phi) \text{d}\phi.
\end{equation}
In this paper, we consider an element-wise thresholding device characterized overall by the positive bias $\mathbf{b} \in \mathbb{R}^{N_{\rm r} \times 1}_+$, whose elements are drawn from an appropriate distribution during fabrication, hence ensuring low complexity.
From \eqref{eq:C}, the transform yields the following piecewise mapping for each element $j=1,\dots,N_{\rm r}$ of the diffractive MS layer:
\begin{equation}
    C(|[\mathbf{y}]_j|) = \begin{cases} 
    \mathbf{0}, & |[\mathbf{y}]_j| \le [\mathbf{b}]_j \\
    \begin{aligned}
    \frac{1}{\pi} \left( |[\mathbf{y}]_j| \arccos\left(\frac{[\mathbf{b}]_j}{|[\mathbf{y}]_j|}\right)\right. & \\
    \left.- [\mathbf{b}]_j \sqrt{1 - \left(\frac{[\mathbf{b}]_j}{|[\mathbf{y}]_j|}\right)^2}\right) & ,
    \end{aligned} & |[\mathbf{y}]_j| > [\mathbf{b}]_j
    \end{cases}.
    \label{eq: C[|y|]}
\end{equation}
While this expression captures the exact physical behavior of the \gls{MS} elements, the transcendental terms are computationally demanding for practical optimization.
By approximating the transition for $|[\mathbf{y}]_j| > [\mathbf{b}]_j$ as a continuous quasi-linear slope, the response of the \gls{MS} layer can be expressed as the well-known softplus activation function~\cite{Wiemann2024_softplus}:
\begin{equation}\label{eq:softplus-activation}
    g([\mathbf{y}]_j) \simeq 0.2
    \log\left( 1+e^{5 (|[\mathbf{y}]_j| -[\mathbf{b}]_j)  }\right)
    \exp\left(j {\rm arg}\{[\mathbf{y}]_j\}\right).
\end{equation}
Once processed through~\eqref{eq:softplus-activation}, the resulting signal is linearly combined via the controllable weight vector $\mathbf{w} \in \mathbb{C}^{N_{\rm r} \times 1}$ prior to its feeding to the \gls{RF} chain, yielding the scalar output:
\begin{equation}\label{eq:output}
    \hat{z} \triangleq \mathbf{w}^\top g(\mathbf{y}) + \tilde{n},
\end{equation}
where $\tilde{n} \in \mathbb{C}$ denotes the \gls{AWGN} introduced at the RX. Herein, the physical implementation of~\eqref{eq:output} is based on \gls{SIM} and is detailed in Section~\ref{sec:sim}.

\subsection{XL MIMO as an ELM} 
\label{sec:xl-mimo-elm-theory}
The previously presented \gls{XL} \gls{MIMO} system may be regarded as a form of an \gls{ELM}, where the transformations~\eqref{eq:received-signal} and~\eqref{eq:softplus-activation} implement the random hidden layer (with random coefficients ${\bm \theta} \triangleq \{ \mathbf{H}, \mathbf{b} \}$) and the combining weights $\mathbf{w}$ of~\eqref{eq:output} play the role of the trainable weights of the output layer.
We thus leverage the developed mathematical framework~\cite{HuaZhuSie:06, LiHuaSarSun:05, Stylianopoulos_MIMO_ELM} for \gls{ML} inference, which accounts for random parameters alongside trainable weights, enabling both the training procedure and its theoretical guarantees to be rigorously described.
To find the optimal weight vector for a given dataset $\mathcal{D}$, let us first define the \gls{ELM} activation matrix $\mathbf{G} \in \mathbb{C}^{D \times N_{\rm r}}$ as the transpose of the activated signals at the aforedescribed \gls{NL} \gls{MS} layer at the RX front:
\begin{equation}\label{eq:hidden-layer-outputs}
    \mathbf{G} \triangleq [g(\mathbf{y}^1), \dots, g(\mathbf{y}^D)]^{\top}.
\end{equation}
By further denoting the vector of target values for the whole dataset as $\mathbf{z} \triangleq [z^1 , \dots , z^D]^\top \in \mathbb{C}^{D \times 1}$, $\mathbf{w}$ can be optimized to minimize the \gls{LS} error between the target and output values, following the standard \gls{ELM} formulation:
\begin{equation}\label{eq:LS-objective}
    \mathbf{w}^{*} \triangleq \arg \min_{\mathbf{w}} \| \mathbf{z} -  \mathbf{G}\mathbf{w} \|_{2}^{2}.
\end{equation}
This yields the closed-form solution:
\begin{equation} \label{eq:weights-opt}
    \mathbf{w}^{*} = \left(\mathbf{G}^{\rm H} \mathbf{G} + \ell \mathbf{I}\right)^{-1}\mathbf{G}^{\rm H} \mathbf{z} \in \mathbb{C}^{N_{\rm r} \times 1},
\end{equation}
which accounts for L2 regularization, controlled by the hyperparameter $\ell>0$, to ensure generalization beyond the training dataset $\mathcal{D}$. 
A key advantage of \glspl{ELM} is their universal approximation property, expressed for the proposed \gls{ELM} as follows (a full proof with additional analysis has been deferred for the journal version of this work):

\begin{proposition}
Consider a dataset $\mathcal{D} \triangleq \{(\mathbf{x}^{(i)}, z^{(i)})\}_{i=1}^D$ and the system defined by~\eqref{eq:transmit-signal},~\eqref{eq:received-signal},~\eqref{eq:softplus-activation}, and~\eqref{eq:output}.  
In the high-\gls{SNR} regime ($\tilde{n} \to 0$), assuming 
$\mathbf{H}$ follows the Ricean model in~\eqref{eq:ricean} and 
$\mathbf{b}$ is drawn from a continuous distribution with positive support, there exists a weight vector 
$\mathbf{w}^\ast$ such that 
$\mbox{$\|\mathbf{z} \!-\! \mathbf{G} \mathbf{w}^\ast\|_2^2 = 0$}$ for $N_{\rm r} \!=\! D$ with probability $1$.
\end{proposition}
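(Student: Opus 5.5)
The plan is to show that the square matrix $\mathbf{G}\in\mathbb{C}^{D\times D}$ obtained for $N_{\rm r}=D$ is nonsingular with probability $1$. Once this holds, $\mathbf{w}^\ast=\mathbf{G}^{-1}\mathbf{z}$ gives $\|\mathbf{z}-\mathbf{G}\mathbf{w}^\ast\|_2^2=0$. We take $\tilde n\to 0$ so that the output is exactly $\mathbf{G}\mathbf{w}$. We also assume the inputs $\mathbf{x}^{(i)}$ are pairwise distinct; this is implicit, since otherwise two rows coincide and a consistent dataset could not demand different targets for them. The argument follows the classical ELM interpolation theorem~\cite{HuaZhuSie:06}: build the columns of $\mathbf{G}^\top$ one hidden unit at a time and show that each new column avoids the span of the previous ones almost surely.

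\textbf{Step 1 (distinct received amplitudes).} Fix a column $j$, i.e., a row $\mathbf{h}_j^\top$ of $\mathbf{H}$. Then $[\mathbf{y}^{(i)}]_j=\mathbf{h}_j^\top\bar{\mathbf{x}}^{(i)}$, and the transmitted vectors $\bar{\mathbf{x}}^{(i)}$ are distinct because $\mathbf{x}\mapsto\bar{\mathbf{x}}$ in~\eqref{eq:transmit-signal} is injective on $(0,1)^{N_{\rm t}}$. Under~\eqref{eq:ricean} with $K<\infty$, $\mathbf{h}_j$ equals a deterministic LoS term plus a nondegenerate circular Gaussian vector, so it has a density on $\mathbb{C}^{N_{\rm t}}$. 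For $i\neq k$, the set $\{\mathbf{h}: |\mathbf{h}^\top\bar{\mathbf{x}}^{(i)}|=|\mathbf{h}^\top\bar{\mathbf{x}}^{(k)}|\}$ is the zero set of a nontrivial real quadratic form. It is nontrivial because $\bar{\mathbf{x}}^{(i)}\neq e^{\jmath\alpha}\bar{\mathbf{x}}^{(k)}$, which holds since the entries of $\mathbf{x}$ are positive and the phases $\bm\psi$ are shared. That zero set therefore has Lebesgue measure zero. Consequently, the amplitudes $r_i\triangleq|[\mathbf{y}^{(i)}]_j|$ are almost surely pairwise distinct, and the phases $\varphi_i$ are almost surely well defined.

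\textbf{Step 2 (column independence via the bias).} The $j$th column of $\mathbf{G}$ is $\mathbf{c}(b)=[\sigma(r_i-b)e^{\jmath\varphi_i}]_{i=1}^D$, where $\sigma(t)=0.2\log(1+e^{5t})$ is the softplus in~\eqref{eq:softplus-activation}. Each entry is real analytic in $b$. As in~\cite{HuaZhuSie:06}, the claim is that $\mathbf{c}(b)$ is not confined to any proper subspace $S\subset\mathbb{C}^D$ as $b$ varies over an interval. Suppose it were. Then some $\bm\alpha\neq\mathbf{0}$ would satisfy $\sum_i\alpha_i e^{\jmath\varphi_i}\sigma(r_i-b)=0$ for all $b$ in an interval, and by analyticity for all $b\in\mathbb{R}$. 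Differentiating twice gives $\sum_i\beta_i\sigma''(r_i-b)=0$ with $\beta_i=\alpha_i e^{\jmath\varphi_i}$. Letting $b\to-\infty$ and using the exponential tails of $\sigma''$ on the distinct $r_i$ (ordering them and peeling off the dominant term) forces every $\beta_i=0$, a contradiction. Since the zero set of a nonzero analytic function is countable, it follows that for any fixed $S$ of dimension less than $D$, $\mathbb{P}(\mathbf{c}(b)\in S)=0$ whenever $b$ has a continuous distribution.

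\textbf{Step 3 (induction).} Condition on $\mathbf{H}$, which satisfies Step 1 for all $j$ almost surely. Because the biases are independent across elements, the $j$th column falls in the span of the first $j-1$ columns (a subspace of dimension at most $j-1<D$) with conditional probability zero. A union bound over $j=1,\dots,D$ then gives $\operatorname{rank}\mathbf{G}=D$ almost surely.

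The main obstacle is Step 2. The difficulty is handling the complex phases together with the positive support of $\mathbf{b}$. The $b\to-\infty$ limit lies outside that support, so one must rely on analytic continuation; this is legitimate because the softplus is entire in a neighbourhood of the real line. If the peeling argument proves delicate, the fallback is to use the asymptotics $\sigma(t)\sim 0.2e^{5t}$ as $t\to-\infty$, which reduce the identity to a vanishing generalized exponential sum $\sum_i\beta_ie^{5r_i s}$ with distinct exponents; such a sum is linearly independent.
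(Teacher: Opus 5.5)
Your overall architecture (distinct received amplitudes almost surely, then a Huang-style column-by-column induction that randomizes one bias at a time) is sound. However, the step you yourself flag as the crux is not actually proved. Neither of your arguments establishes that the translates $b\mapsto\sigma(r_i-b)$, $i=1,\dots,D$, are linearly independent, because a shift in $b$ changes only the \emph{prefactor} of each exponential tail, never its \emph{rate}. Concretely, $\sigma''(r_i-b)=5e^{-5r_i}e^{5b}/(1+e^{-5r_i}e^{5b})^{2}$. So as $b\to-\infty$ every term equals $5e^{-5r_i}e^{5b}(1+o(1))$: all $D$ terms decay at the common rate $e^{5b}$, and none dominates whatever the ordering of the $r_i$. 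The leading order therefore gives only the single equation $\sum_i\beta_ie^{-5r_i}=0$. The fallback fails for the same reason. As $b\to+\infty$, $\sigma(r_i-b)\sim0.2\,e^{5r_i}e^{-5b}$, so in the variable $b$ you obtain $e^{-5b}\sum_i\beta_ie^{5r_i}$, not an exponential sum with distinct exponents. A sum of the form $\sum_i\beta_ie^{5r_is}$ would need a parameter that \emph{dilates} the $r_i$, which the bias does not supply.

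The lemma is true, and the repair is to keep all orders of the expansion. Set $s=e^{-5b}$ and $q_i=e^{5r_i}$, which are distinct and positive by your Step 1. For $b>\max_ir_i$ one has $\sigma(r_i-b)=0.2\log(1+q_is)=0.2\sum_{n\ge1}(-1)^{n-1}q_i^ns^n/n$. The identity holds for all real $b$ after your continuation step, so it forces $\sum_i\beta_iq_i^n=0$ for every $n\ge1$. For $n=1,\dots,D$ the coefficient matrix $[q_i^n]$ is a Vandermonde matrix times $\mathrm{diag}(q_i)$, hence $\bm\beta=\mathbf{0}$; no differentiation is needed. Alternatively, $\sigma''(r_i-\cdot)$ has double poles at $b=r_i-\jmath\pi(2k+1)/5$, which differ across $i$. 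Note also that softplus is not entire; it is holomorphic only in the strip $|\mathfrak{Im}\,t|<\pi/5$, and real analyticity on $\mathbb{R}$ is all your continuation uses.

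With this fix, your proof is a genuine alternative to the paper's. The paper never argues column by column. It treats $\det\mathbf{G}$ as a single real-analytic function of all real coordinates of $(\mathbf{H},\mathbf{b})$, off the codimension-2 set where some $[\mathbf{H}\bar{\mathbf{x}}^{(i)}]_j=0$. It then invokes the fact that a not-identically-zero real-analytic function on a connected open set has a Lebesgue-null zero set. Non-vanishing is certified at one witness point ($\mathfrak{Im}(\mathbf{H})=\mathbf{0}$, $\bm\psi=\mathbf{0}$), where the model reduces to a real ELM covered by Huang et al. That route never touches the complex phases or any independence structure. Its cost is that it needs $(\mathbf{H},\mathbf{b})$ to be jointly absolutely continuous and uses the real-valued result as a black box. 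Yours is self-contained and makes explicit the distinct-input and $K<\infty$ hypotheses that the paper leaves implicit. It needs only atomless bias marginals, but it relies on the $[\mathbf{b}]_j$ being mutually independent and independent of $\mathbf{H}$, which the proposition does not state.
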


\begin{proof}[Sketch of proof]
To show that~\eqref{eq:LS-objective} admits a zero-error solution, it suffices to show that $f_{\mathcal{D}}(\theta) \triangleq {\rm det}(\mathbf{G}) \neq 0$ with probability $1$.
To that end, the following corollary of the Identity Theorem for real analysis can be used, stating that, for a real analytic function $f_{\mathcal{D}}(\theta)$ on an open connected domain $\mathcal{I}$ of the isomorphic space $\mathbb{R}^{2(N_{\rm r}D + N_{\rm r})}$ in which $f_{\mathcal{D}}(\theta)$ is not identically zero, the zero set has zero Lebesgue measure~\cite{Mityagin2020_Zero_Set_Identity_Theorem}.
Since it consists of a composition of real analytic functions, 
$f_{\mathcal{D}}(\theta)$ is real analytic, as long as $|[ \mathbf{H}\bar{\mathbf{x}}]_j| \neq 0$ $\forall j=1,\dots,N_r$. Such $\mathbf{H}$ values have zero probability of occurring under Ricean fading, since they are embedded in a lower-dimensional manifold. Moreover, they form a closed set with co-dimension compared to the full parameter space of $2$, therefore, excluding it, leaves the domain of $f_{\mathcal{D}}(\theta)$ connected and open. Finally, $f_{\mathcal{D}}(\theta)$ is not identically zero, as shown by the counterexample in which $\mathfrak{Im}(\mathbf{H})=\mathbf{0}$. By performing \gls{AM} transmission of $\mathbf{x}^{(i)}$ with ${\bm \psi} = \mathbf{0}$, the proposed framework reduces to a standard real-valued \gls{ELM}, for which, $\mathbf{G}$ is invertible~\cite[Theorem 2.2]{HuaZhuSie:06}.
\end{proof}

\subsection{OTA Analog Combining using SIM}\label{sec:sim}
Up to this point, the physical implementation of the combining operation has been left unspecified to focus on theoretical analysis. 
We now describe the \gls{RX} architecture that performs this combining \gls{OTA}. 
Specifically, we consider that the outputs $g(\mathbf{y})$ of the diffractive \gls{NL} \gls{MS} layer are fed into a \gls{SIM} of $L$ diffractive linear \glspl{MS}.
Each $l$-th layer ($l=1,\dots, L$) comprises a square grid of $N_l$ elements spaced by $\lambda/ 2$, with $\lambda = c/f_0$ denoting the wavelength at the carrier frequency $f_0$, and $c$ is the speed of light.
Let ${\bm \Omega}_l \in \mathbb{C}^{N_l \times N_{l-1}}$ denote the signal propagation coefficients between the $(l-1)$-th and the $l$-th \gls{MS} layers, where $N_0 = N_{\rm r}$ indicates the number of elements of the aforedescribed \gls{NL} \gls{MS}, and ${\bm \omega}_L \in \mathbb{C}^{N_L \times 1}$ represents the propagation between the last \gls{MS} layer and the single-antenna element attached to the \gls{RX} \gls{RF} chain. Typical works leveraging \gls{SIM} technology~\cite{AXN23_SIM} assume free-space propagation between the \gls{MS} layers, i.e., considering an anechoic enclosure, and model the element-to-element propagation through geometric optics~\cite{AXN23_SIM, XYN18_Diffractive_DNN, GJZ24_SIM_TOC, Momeni2022_wave_DL_computing}.
In this work, we model each ${\bm \Omega}_l$ as a full-rank pseudo-random matrix~\cite{Styl:25}, which presumes a reverberating enclosure featuring a non-uniform, non-planar distribution of elements~\cite{Tacid_PhysFad, FILM_25}.
Arguably, this choice is more realistic as it accounts for multipath components arising from imperfections of the enclosure and allows for the \gls{MS} layers to be placed arbitrarily close.
Moreover, the richer propagation diversity compared to geometric optics provides substantial gains for the optimization framework.

The responses of each $l$-th \gls{MS} layer are expressed as ${\bm \phi}_l \triangleq {\bm \alpha}_l \exp(\jmath \pi {\bm \varphi}_l)$, with the amplitudes ${\bm \alpha}_l \in [0,1]^{N_l \times 1}$ and the phase shifts ${\bm \varphi}_l \in [0,2]^{N_l \times 1}$ being {\em controllable} parameters.
Setting ${\bm \Phi}_l \triangleq {\rm diag}({\bm \phi}_l)$ and ${\bm \varphi} \triangleq \{{\bm \phi}_l \}_{l=L-1}^1$, the overall transfer function of the $L$ linear \glspl{MS} is given by (as in Fig.~\ref{fig:system-model}):
\begin{equation}\label{eq:sim-response}
    \mathbf{w}_{\bm \varphi} \triangleq \Big( {\bm \omega}_L^\top \prod_{l=L-1}^{1} {\bm \Phi}_{l} {\bm \Omega}_{l}\Big)^\top \in \mathbb{C}^{N_{\rm r} \times 1}.
\end{equation}
Thus, the \gls{SIM} response is used to perform \gls{OTA} combining by substituting $ \mathbf{w}$ in~\eqref{eq:output} with $\mathbf{w}_{\bm \varphi}$.

\begin{figure}
    \centering
    \includegraphics[width=1.05\linewidth]{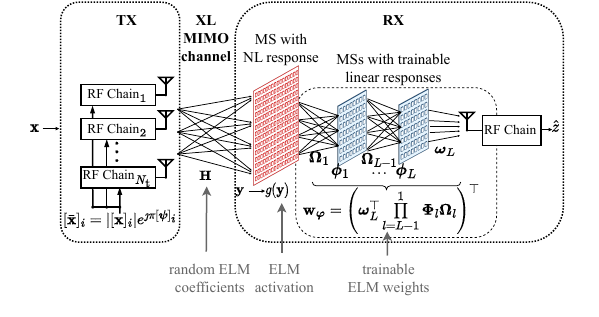}
    \caption{The proposed \gls{XL} \gls{MIMO} system for implementing the proposed \gls{OTA}-\gls{ELM} framework using linear and \gls{NL} \glspl{MS}. The channel and \gls{MS} responses are used as components of the \gls{ELM} algorithm to realize \gls{OTA} inference. The flow of computation during the forward pass is also sketched in this figure.}
    \vspace{-0.3cm}
    \label{fig:system-model}
\end{figure}

\begin{figure*}[t]
    \centering
    \begin{subfigure}[b]{0.3\textwidth}
        \centering
        \includegraphics{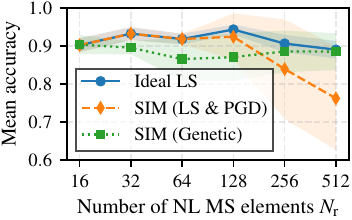}
        \caption{WBCD.}
        \label{fig:results_breast_cancer}
    \end{subfigure}
    \hfill
    \begin{subfigure}[b]{0.3\textwidth}
        \centering
        \includegraphics{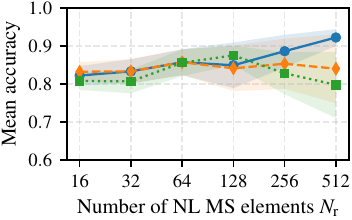}
        \caption{Parkinson's.}
        \label{fig:results_parkinsons}
    \end{subfigure}
    \hfill
    \begin{subfigure}[b]{0.3\textwidth}
        \centering
        \includegraphics{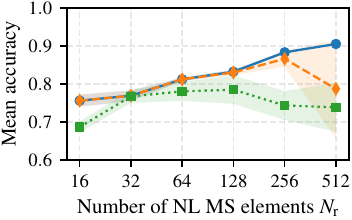}
        \caption{MNIST.}
        \label{fig:results_mnist}
    \end{subfigure}
    \caption{Classification accuracy the proposed \gls{OTA}-\gls{ELM} approaches versus the number of metamaterials $N_{\rm r}$ (corresponding to the number of trainable parameters) at the \gls{RX}'s \gls{NL} \gls{MS} layer, considering three distinct datasets.}
    \vspace{-0.5cm}
    \label{fig:overall_results}
\end{figure*}

\section{Training Approaches with a SIM} 
\label{sec:training}

\subsection{Approximation of the LS Solution}
Since the optimal digital weights $\mathbf{w}^\ast$ are known from~\eqref{eq:weights-opt}, the \gls{SIM} responses can be optimized so that $\mathbf{w}_{\bm \varphi}$ closely approximates $\mathbf{w}^\ast$, leading to the following optimization problem with respect to the \gls{SIM} parameters:
\begin{equation}\label{eq:approx-objective}
    {\bm \varphi}^{\rm GD} \triangleq \{ {\bm \phi}^\ast_l \}_{l=1}^L \triangleq \arg \min_{{\bm \varphi}} \| \mathbf{w}^\ast -\mathbf{w}_{\bm \varphi}\|_{2}^{2} \, ,
\end{equation}
which can be solved via \gls{PGD}.
Since automatic differentiation may be applied, and due to a lack of space, the gradient derivations are omitted. However, we note that this approach relies on the knowledge of $\mathbf{G}$~\cite{Stylianopoulos_MIMO_ELM}, which involves measuring the impinging signal at each element of the front \gls{NL} \gls{MS} while using a single \gls{RF} chain.
To this end, a time-division measurement procedure can be employed, where, at each time slot $t=1,\dots,N_{\rm r}$, only the $t$-th metamaterials is active while all others are turned off (equivalently, setting $[\mathbf{w}]_j$ or $[\mathbf{w}_{\bm \varphi}]_j$ to $\mathbbm{1}_{j=t}$ in~\eqref{eq:output}).
Since \glspl{MS} can be reconfigured with sub-$\mu$s latency, $\mathbf{G}$ can be measured, and the optimization can be performed within the channel coherence time, provided the channels exhibit reasonably slow fading.
Furthermore, using multiple \gls{RX} \gls{RF} chains solely for training allows parallel measurements, resulting in a linear speedup.
More details on the hardware realization of this approach will be provided in the journal version of this work.

\subsection{Direct SIM Training via Genetic Optimization}
Alternatively, we further apply a black-box optimization approach that does not rely on the observation of $\mathbf{G}$.
Specifically, we employ a \gls{GO} approach as follows. Starting from a population of $k_{\rm pop}$ candidate weight vectors ${\bm \varphi}^{\rm GO} \triangleq \{ {\bm \phi}_l \}_{l=1}^L$, we evaluate the following maximization objective:
\begin{equation}\label{eq:GO-objective}
    \mathcal{L}({\bm \varphi}^{\rm GO}) \triangleq - \Big\| \mathbf{z} -  \underbrace{\mathbf{G} \Big( {\bm \omega}_L^\top \prod\nolimits_{l=L-1}^{1} {\bm \Phi}_{l} {\bm \Omega}_{l}\Big)^\top}_{\hat{\mathbf{z}}\triangleq [\hat{z}^1 , \dots , \hat{z}^D]^\top \in \mathbb{C}^{D \times 1}}  \Big\|_{2}^{2}.
\end{equation}
The $k_{\rm top}$ top-performing candidates are admissible for offspring generation.
To create each offspring through uniform crossover, two parents are selected through random $3$-way tournaments.
Mutations are applied with $\Delta=\pm 0.2 \pm \jmath0.2$ to each  $[{\bm \varphi}^{\rm GO}]_j$.
The process is repeated for $k_{\rm gen}$ generations, after which the best ${\bm \varphi}^{\rm GO}$ is kept; therefore, $k_{\rm gen}\times k_{\rm pop}$ total evaluations of the objective are needed. This makes the latency of this approach also dependent on the switching time of the \gls{SIM} layers configuration.
Note, however, that only the output $\hat{\mathbf{z}}$ at the \gls{RF} chain is required, rather than the full matrix $\mathbf{G}$.

\begin{figure}[t]
    \centering
    \includegraphics[width=0.75\linewidth]{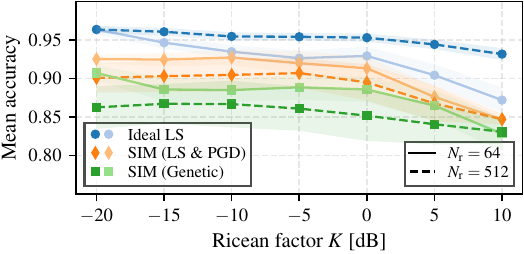}
    \caption{Classification accuracy of the \gls{OTA}-\gls{ELM} approaches over different Ricean factors, considering the \acrshort{WBCD} dataset.}
    \vspace{-0.3cm}
    \label{fig:results-ricean}
\end{figure}

\begin{figure}[t]
    \centering
    \includegraphics[width=0.75\linewidth]{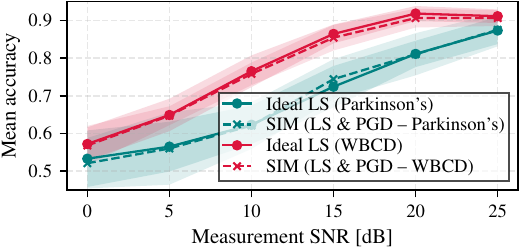}
    \caption{Performance of the ideal \gls{LS} benchmark and the approximate \gls{SIM} solution trained through \gls{PGD} versus different levels of the measurement \gls{SNR}, introduced by the observation of the received signals to compute the \gls{LS} solution.}
    \vspace{-0.3cm}
    \label{fig:results-snr}
\end{figure}

\section{Numerical Results and Discussion}
We have evaluated the proposed EI approaches on several standard small-to-medium-size binary classification datasets under a range of system settings and channel conditions. The considered datasets are the Parkinson's and the \gls{WBCD} of $22$ and $30$ numerical features (corresponding to $N_{\rm t}$ values) with $240$ and $700$ datapoints for disease diagnosis~\cite{UCI}, respectively, as well as the MNIST dataset of handwritten digit image recognition~\cite{MNIST} (of $6\times 10^4$ points subsampled to $100$ pixels and converted the objective to even/odd digit classification). Per-feature standardization was applied during preprocessing. To determine the predicted class, we have set a threshold of $0.5$ on $\mathfrak{Re}(\hat{z})$ of~\eqref{eq:output}, and we report in the performance figures that follow the mean and standard deviation of accuracy values over $100$ random initializations. For convenience, the phases ${\bm \psi}$ of the \gls{AM} signals were set identically to $\mathbf{0}$, and $\mathbf{b}$ was sampled from a Rayleigh distribution with scale parameter $\mathbb{E}[\|\mathbf{H}\|_{\rm F}]/(2 N_{\rm r} N_{\rm t})$, for it to be in the same order as $|\mathbf{y}|$.

Unless otherwise specified, the channel parameters were set as $P_L = -50$ dB and $K=0$ dB.
Since $\hat{z}$ contains one bit of information, the receive \gls{SNR} has negligible effect and was therefore set to $15$ dB.
For the \gls{SIM}, we employed $L=2$ linear layers, each consisting of $ 32 \times 32$ diffractive elements.
We considered ${\bm \Omega}_l \sim \mathcal{CN}(\mathbf{0}, P'_L \mathbf{I})$ with $P'_L$ set to $-10$ dB.
The regularization weight was set as $\ell=10^{-6}$ to avoid severe overfitting, and we allowed a maximum of $T=2000$ iterations of the employed \gls{PGD} procedure with step size $0.01$, irrespective of the number of training parameters for fairness, although convergence was achieved far earlier for most scenarios.
For the presented \gls{GO} approach, we set $k_{\rm pop} =100$, $k_{\rm gen}=200$, and $k_{\rm top}=30$.
The idealized \gls{LS} solution (which can be seen as an extension of the proposed framework of~\cite{Stylianopoulos_MIMO_ELM}) was used as an upper bound.
This idealized weighting can be realized either with digital (requiring $N_{\rm r}$ \gls{RF} chains) or analog (where a single \gls{RF} chain suffices) combining. For the latter case, phase shifters~\cite{Stylianopoulos_MIMO_ELM} or \gls{MS} structures with lossless waveguides~\cite{shlezinger2019dynamic} may be used.

The performance of the proposed training approaches for an increasing number of elements $N_{\rm r}$ at the \gls{RX}'s front \gls{MS} layer is displayed in Fig.~\ref{fig:overall_results}.
Note that $N_{\rm r}$ also corresponds to the number of trainable \gls{ELM} parameters. It is observed that, as $N_{\rm r}$ increases, the classification accuracy generally improves (except for slight overfitting in the simplest WBCD case), a trend that is consistent with the theoretical analysis as $N_{\rm r} \to D$.  
Both approximate \gls{SIM} solutions achieve performance close to the ideal case; however, for the largest $N_{\rm r}$ values, a noticeable degradation occurs due to an insufficient number of training iterations. Indicatively, the execution of Ideal LS, LS \& PGD, and GO algorithms takes, respectively, $0.32$, $6.21$, $178.5$ secs for the MNIST case when $N_{\rm r}=64$ and $2.37$, $10.28$, $242.18$ secs for $N_{\rm r}=512$ using unoptimized Python code running on a standard desktop computer.

In the previous investigation, we have assumed perfect knowledge of $\mathbf{G}$ during training. However, measuring $g(\mathbf{y})$ requires one or more \gls{RF} chains, which introduces measurement noise.  
To model this, we have included an \gls{AWGN} term $\tilde{\mathbf{n}}_0 \sim \mathcal{CN}(\mathbf{0}, 1/R\,\, {\rm diag}(|\mathbf{y}|^2))$ in~\eqref{eq:received-signal}, present only during the collection of signals for computing $\mathbf{G}$ and the \gls{LS} solution in~\eqref{eq:weights-opt}, in which $R$ denotes the target measurement \gls{SNR} in linear scale.
Figure~\ref{fig:results-snr} reports the performance across different measurement \gls{SNR} levels. It is shown that, while performance degrades at low measurement \gls{SNR}, the \gls{SIM} approximation achieves results comparable to the ideal \gls{LS} case.  
At sufficiently high measurement \gls{SNR}, both approaches converge to the noise-free performance, validating the idealized assumptions of our theoretical analysis.

Similarly, so far, we have assumed rich scattering ($K=0$) in the \gls{TX}-\gls{RX} channel, ensuring that $\mathbf{H}$ is a full-rank matrix. This improves the universal approximation capability of our framework, as the condition number of $\mathbf{G}$ decreases, yielding better conditioning for the \gls{LS} solution. Figure~\ref{fig:results-ricean} illustrates the impact of channel scattering, considering various Ricean factors. As depicted, both the ideal and \gls{SIM}-approximate methods maintain stable performance under sufficiently diverse channels. However, as the channel becomes \gls{LoS}-dominant, classification accuracy degrades because the columns of $\mathbf{G}$ become linearly dependent, and therefore, the conditions for universal approximation are no longer satisfied.


\vspace{-0.1cm}
\section{Conclusion}
This paper showcased that \gls{XL} \gls{MIMO} systems with properly designed \gls{MS} components at the \gls{RX} can perform computations equivalent to a complex-valued \gls{ELM}, enabling fully \gls{OTA} \gls{ML}-based inference on the transmitted data. The channel coefficients serve as random hidden-layer weights, while the \gls{RX}'s \gls{SIM} implements the activation function, bias, and trainable weights.  
Theoretical analysis confirmed universal approximation. Two practical approaches, one approximating the ideal \gls{LS} solution and the other based on \gls{GO} without requiring per-element signal observations, achieved performance close to the ideal \gls{LS} case under a range of system parameters.

We further note that the novelty of the proposed architecture lies in the joint use of a fixed NL-MS layer to physically realize the ELM activation and of a cascaded linear SIM to implement the trainable output layer within a single-RF-chain receiver, a combination not addressed by prior XL-MIMO-ELM~\cite{Stylianopoulos_MIMO_ELM} (which requires multiple analog beamformers and lacks an NL layer). Future work will provide a comparative analysis further positioning the proposed NL-SIM architecture against digital and over-the-air ELM implementations, thereby clarifying the novelty of the proposed solution and its associated benefits in terms of hardware complexity, energy consumption, and training/measurement overhead.

\vspace{-0.1cm}
\balance

\bibliographystyle{IEEEtran}
\bibliography{Biblio}

\end{document}